\def\BibTeX{{\rm B\kern-.05em{\sc i\kern-.025em b}\kern-.08em T\kern-.1667em\lower.7ex\hbox{E}\kern-.125emX}}
\theoremstyle{theorem}
\newtheorem{theorem}{Theorem}%[section]
\newtheorem{lemma}{Lemma}
\newtheorem*{cor}{Corollary}
\theoremstyle{definition}
\newtheorem{definition}{Definition}%[section]
\theoremstyle{remark}
\begin{document}
	\history{Date of publication xxxx 00, 0000, date of current version xxxx 00, 0000.}
	\doi{-}
	\title{Full-Resilient Memory-Optimum Multi-Party Non-Interactive Key Exchange}
	\author{\uppercase{Majid Salimi}\authorrefmark{1}, %\IEEEmembership{Fellow, IEEE},
		\uppercase{Hamid Mala\authorrefmark{2}, Honorio Martin \authorrefmark{3}, and Pedro Peris-Lopez. \authorrefmark{4}}
		%\IEEEmembership{Member, IEEE}
	}
	\address[1]{
		Faculty of Computer Engineering, University of Isfahan, Isfahan, Iran. (e-mail: M.Salimi@eng.ui.ac.ir )}
	\address[2]{
		Faculty of Computer Engineering, University of Isfahan, Isfahan, Iran. (e-mail: h.mala@eng.ui.ac.ir)}
	\address[3]{ Department of Electronic Technology, University Carlos III of Madrid, Avda. de la Universidad 30, 28911 Leganés, Spain (e-mail: hmartin@ing.uc3m.es)}
	\address[4]{
		Department of Computer Science, University Carlos III of Madrid, Avda. de la Universidad 30, 28911 Leganés, Spain. (e-mail:pperis@inf.uc3m.es)}
	\tfootnote{}
	\markboth
	{Majid Salimi \headeretal: Full-Resilient Memory-Optimum Multi-party Non-Interactive Key Exchange}
	{Majid Salimi et al. \headeretal: Full-Resilient Memory-Optimum Multi-party Non-Interactive Key Exchange}
	
	\corresp{Corresponding author: Hamid Mala (e-mail:h.mala@eng.ui.ac.ir).}

	\begin{abstract}
		Multi-Party Non-Interactive Key Exchange (MP-NIKE) is a fundamental cryptographic primitive in which users register into a key generation centre and receive a public/private key pair each. After that, any subset of these users can compute a shared key without any interaction. Nowadays, IoT devices suffer from a high number and large size of messages exchanged in the Key Management Protocol (KMP). To overcome this, an MP-NIKE scheme can eliminate the airtime and latency of messages transferred between IoT devices. 
		
		MP-NIKE schemes can be realized by using multilinear maps. There are several attempts for constructing multilinear maps based on indistinguishable obfuscation, lattices and the Chinese Remainder Theorem (CRT). Nevertheless, these schemes are inefficient in terms of computation cost and memory overhead. Besides, several attacks have been recently reported against CRT-based and lattice-based multilinear maps. There is only one modular exponentiation-based MP-NIKE scheme in the literature which has been claimed to be both secure and efficient. In this article, we present an attack on this scheme based on the Euclidean algorithm, in which two colluding users can obtain the shared key of any arbitrary subgroup of users. We also propose an efficient and secure MP-NIKE scheme. We show how our proposal is secure in the random oracle model assuming the hardness of the root extraction modulo a composite number.
	\end{abstract}
	
	\begin{IEEEkeywords}
		Multi-Party Non-Interactive Key Exchange, Broadcast Encryption, Internet of Things, Random Oracle Model.
	\end{IEEEkeywords}

	\maketitle
	\section{Introduction}
	\label{intro}
	In a key distribution scheme, an off-line Key Generation Center (KGC) distributes keying information through a secure channel to every node (user) in the network. Later, every pair of users in the system, by using the keying information they hold, will be able to determine a key known only to them. This operating mode enables them to have encrypted communications \cite{Stinson}. \color{black} Suppose we have a set of $n$ nodes. \color{black} In its general form, called the multi-party scenario, the key distribution problem is not restricted to only pairs of users, but it must enable any arbitrary subset of these $n$ nodes to determine a shared key \cite{Zhang2018}. A trivial solution to this problem is that a Trusted Authority (TA) generates $M=2^n-n-1$ symmetric keys, and assigns each to one of the $M$ subsets with at least two members. Then, it gives the key for each group (subset of users) to the users who belong to this subset. Any node is a member of $G=2^{n-1}-1$ groups of at least two members. As a result, any node must store $G$ distinct keys, which is impractical. Public-key cryptographic approaches can be employed to address this limitation \cite{LiYuMin2018}. When, instead of a pool of symmetric keys, any user receives only one public/private key pair from the KGC and employs its private key and other users' public keys to generate a shared symmetric key (without any interaction), the scheme is usually referred to as Non-Interactive Key Exchange (NIKE) \cite{NIKE}.\\ %Besides, if the size of subsets is not limited, then the approach will be multi-party. \\
	
	In this article, we focus on multi-party solutions. Notably, in Multi-Party Non-Interactive Key Exchange (MP-NIKE)\- schemes, any user first registers into a KGC and receives a unique public/\-private key pair. Let $W$ be a subset of registered users. Then any user $U_i \in W$ can compute a pre-shared key $K_W$ using its private key and the public keys of other members of the group $W$.
	
	\color{black} We stress that in this article, we focus on key distribution not on the key agreement schemes and their associated features such as forward secrecy or authentication such as in \cite{SPKAE2}, \cite{authen2019} and \cite{authen2019b}. Nevertheless, in our scheme, by using the public key of each node as node identifier, nodes can be authenticated. Because no one else has access to the respective private key and without a valid private key, no one can compute the shared key, so we can be sure that nodes are authenticated. The proposed approach provides a long-term key, not a session key. The long-term key can then be employed as a symmetric key pre-shared among the nodes of the associated subgroup to run  an authentication and/or session key agreement  protocol.  
	\color{black}
	
	\subsection{Applications of MP-NIKE}
	\color{black}As an underlying cryptographic protocol, NIKE has many applications including broadcast encryption, \color{black}key management for wireless sensor networks (WSNs) \cite{Athmani2019,Amin2018}, and group communication for Internet of things (IoT) devices \cite{Guo2018,Mahmood2018}.\\
	
	\textbf{Group Communication for IoT.} One of the applications of MP-NIKE schemes is key management for group communication in Internet of Thing. Suppose a few smart objects in a smart home need to securely communicate, so first they need to securely establish a session key. The key agreement protocols for IoT, such as  \cite{IEEE3}, \cite{IEEE6}, \cite{IEEE7}, \cite{IEEEAccess1}, \cite{IEEEAccess3}, \cite{DingWang2019} and \cite{IEEEAccess4} need to exchange a few messages, while  message exchanging is costly and time-consuming. This is a serious challenge in resource-constrained devices employed in IoT systems; for example, the Maximum Transmission Unit (MTU) at the link layer of Industrial IoT (IIoT) technology, when IEEE 802.15.4 technology is adopted, is  just equal to 127 bytes, so the Key Management Protocol (KMP) messages must be fragmented \cite{X509}. Therefore, even one message can cause too airtime latency. For example, two-party key agreement by running KMP protocol with implicit X.509 certificates takes 3.29 seconds (computation time plus air latency) in a single-hop network \cite{X509}. It gets even worse in multi-hop networks \cite{X509}. 
	
	An MP-NIKE scheme can provide a distinct key between any subset of these objects and without needing to exchange any messages. Therefore by using MP-NIKE, the IoT nodes are not limited in two-way connections, and can securely establish a shared key between any arbitrary group of nodes. Furthermore, by employing MP-NIKE, nodes can securely broadcast messages for other nodes. The new IoT devices can natively and simply compute cryptographic primitives \cite{X509}. Furthermore, as we will compare \color{black}in Section \ref{performance}, \color{black}the proposed MP-NIKE scheme is practical and efficient.\\

	\begin{figure}
		\label{Pic1}
		
		\center 
		\includegraphics[width=3.5	in,height=4.5in,clip,keepaspectratio]{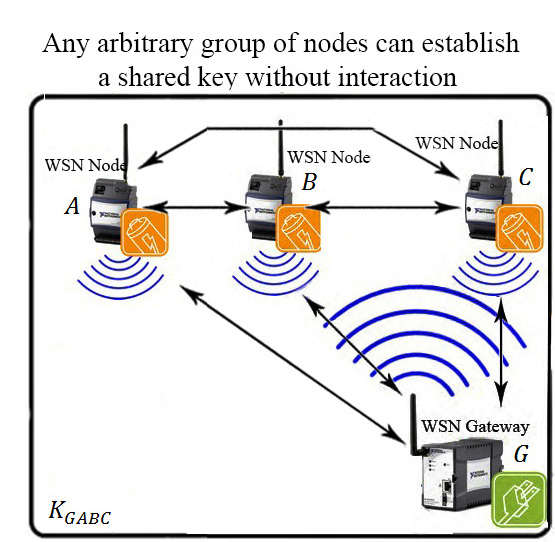}
		\caption{MP-NIKE in WSN} 
	\end{figure}
	\begin{figure}
		\label{Pic2}
		
		\center 
		\includegraphics[width=3.5	in,height=3.2in,clip,keepaspectratio]{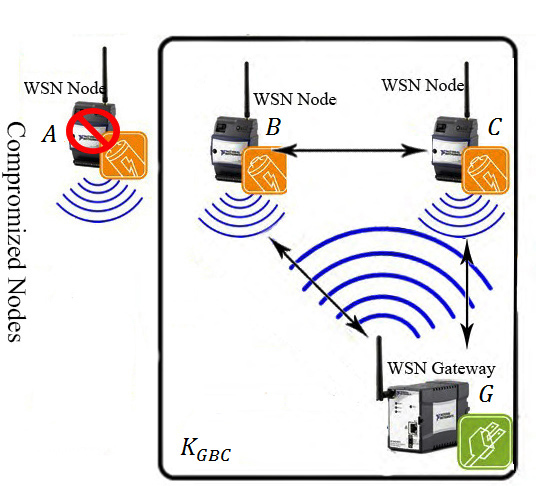}
		\caption{Removing the compromised node in WSN} 
	\end{figure}
	\begin{figure}
		\label{Pic4}
		\center \includegraphics[width=3.5in,height=3.2in,clip,keepaspectratio]{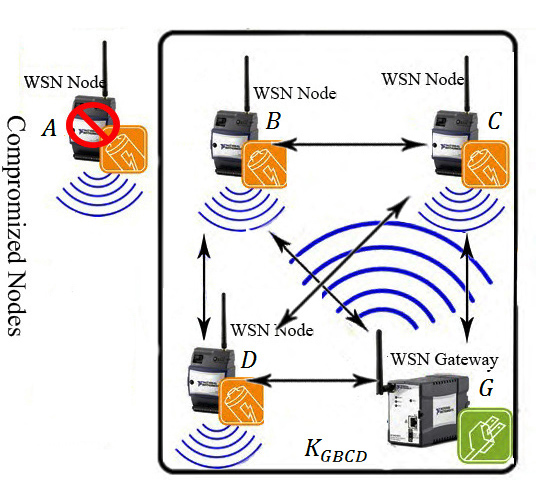}
		\caption{Adding a new node to the group}\color{black} 
	\end{figure}
	
	\textbf{Broadcast encryption.} A practical MP-NIKE scheme can be used to construct a broadcast encryption protocol \cite{4}. Broadcast encryption is a way for broadcasting an encrypted message on a public channel, such that broadcasting server can be sure that just a group of authorized users can decrypt the message. The size and members of the group of authorized users is not constant, and it may change for any single message. Previous broadcast encryption schemes suffer from security faults like ciphertext size and public key size \cite{Brod}. By using a practical MP-NIKE scheme, we can construct an efficient broadcast encryption scheme and solve the problem of long public key and ciphertexts \cite{4}.\\	
	
	\textbf{Key management for WSN.} One of the other applications of the MP-NIKE is key management in Wireless Sensor Networks. Since message exchanging consumes battery power and, on the other hand, sensors have energy limitations, using interactive key exchange schemes, such as \cite{purohit11}, \cite{IEEEAccess2} and \cite{Ismail15}, is not a proper solution to protect confidentiality in this sort of networks. On the one hand, using a single master key for all nodes has very low resilience, and if an adversary captures only one node, he will be able to compromise all nodes of the WSN. On the other hand, if we use a distinct pair-wise key for any two nodes of $n$ nodes, then any node must store $n-1$ different keys. This solution creates a heavy storage burden on each node, also for adding a new node in the future, we would need to update the key chain of every single node. 
	
	Other pair-wise key distribution solutions, such as closest pair-wise key pre-distribution \cite{Liu03}, random pair-wise key scheme \cite{Chan03}, pair-wise key establishment protocol \cite{Zhu03}, combinatorial design-based pair-wise key pre-distribution scheme \cite{Camtepe04}, etc. try to adjust the storage problem and yet provide key resilience, but none of them can guarantee key resilience with O(1) storage complexity. Furthermore, compromised nodes are a challenge in WSN, since they can behave arbitrarily and cooperate with others. Several solutions have been proposed in the literature to address this problem (e.g., \cite{compro} or \cite{compro2}). Unfortunately, these solutions are highly demanding in terms of message exchange. Note that pairing-based two-party NIKE schemes impose massive computation cost and need powerful hardware, which is not considered to be affordable in WSN nodes, because of power and price limitations. \color{black} The Simple Password-Based Encrypted Key Exchange (SPAKE2) protocol \cite{SPKAE2} provides forward secrecy, but, in addition to just being two-party, it requires agreement on a password between any two users and \color{black} it is interactive. \color{black} It means that any two users need to somehow agree on a shared password, so, it is not efficent at all.
	
	Conversely, the proposed MP-NIKE scheme provides resilience and enables any subset of sensors to efficiently compute a shared key without any interactions; also, adding new nodes in future can be quickly addressed. Besides, removing a compromised node will be quickly done. Since the proposed scheme is based on modular exponentiation, its computation cost is lower than pairing-based two-party NIKE schemes. The overall scenario of employing MP-NIKE in WSN  is shown in Fig. 1, Fig. 2 and Fig. 3. In Fig. 1, a group of four honest nodes $A, B, C$ and $G$ can easily and securely communicate by a shared key $K_{ABCG}$ which is computed by each of them without any interaction with the others. Fig. 2 shows how they can easily remove a compromised node $A$ just by using the shared key $K_{BCG}$ computable only by $B, C$ and $G$. The Fig. 3 illustrates that adding a new node can be quickly done.\\
	\subsection{Our Contribution}
	\label{Con}
	In 2014, Eskeland introduced a fully resilient and efficient MP-NIKE scheme based on modular exponentiation in RSA modulus \cite{2}. Nevertheless, in this paper, we propose a Euclidean algorithm-based attack against this scheme, that invalidates Eskeland's claim.
	
	Furthermore, we describe a new fully resilient MP-NIKE scheme. The computation cost of the proposed protocol for computing a shared key for a  group $W$, with size $|W|$, is only $|W|-1$ modular exponentiations and every user needs to store just one public/private key pair of small size. 
	
	\color{black} For the sake of comparison, obtaining a shared key among 19 users at 80-bit security level by using the 5Gen multilinear map (an extension of the CLT13) \cite{5Gen}, takes 33 seconds. However, in our scheme, each user requires to compute 18 modular exponentiations in a 1024-bit modulus so that it takes about 281 milliseconds. \color{black} We proved that the security of our scheme is equal to Fiat-Naor problem, (the security of Fiat-Naor scheme is based on the root extraction in RSA modulus, which is equal to RSA problem \cite{FN}). Also, since our proposal, compared with previous ones, is based on lighter cryptographic operations such as modular exponentiations, its computation cost is low.
	
	\subsection{Paper Organization}
	The rest of this paper is organised as follows. \color{black} Section \ref{Re} introduces the related work. \color{black} In Section \ref{Pre}, the general model of MP-NIKE, as well as the required preliminaries and definitions, are described. In Section \ref{Previous}, we briefly review the Fiat-Naor and the Eskeland schemes and propose an attack on the Eskeland scheme. Our novel MP-NIKE scheme is introduced in Section \ref{prop}. The security of the proposed scheme is discussed in Section \ref{Security}. \color{black}In Section \ref{performance} we compare computational and memory overhead of the proposed scheme against previous MP-NIKE schemes. \color{black}Finally, we conclude the paper in Section \ref{Conclu}.
	
	\section{Related Work}
	\label{Re}
	The first non-interactive key exchange scheme was introduced in the seminal work of Diffie and Hellman in 1976 \cite{DH76}. Their proposal was secure and efficient, but it was bounded only to the two-party case. In 2008, Cash et al.\ introduced a new assumption, called the Twin Diffie-Hellman problem, and presented a new two-party NIKE scheme that is secure in the random oracle model \cite{Cash}. In 2013, Freire et al.\ presented a new two-party NIKE scheme and proved its security using a game-based security model. However, they did not model the key registration process in their security model \cite{NIKE}. A year later, Freire et al.\ successfully modeled the key registration process in the security model of their new two-party NIKE scheme \cite{NIKE2}.
	
	The first three-party non-interactive key exchange scheme was introduced by Joux in 2001 using bilinear pairings \cite{J01}. His protocol was identity-based, secure and efficient, but it could not be extended for more than three parties. \color{black}Fiat and Naor first proposed the idea of MP-NIKE \color{black}in 1992, where they also suggested the employment of this idea in broadcast encryption. Unfortunately, their scheme is only 1-resilient \cite{FN}, i.e. the keys are protected only against one user; in other words, any two or more colluding members could compute the shared key of any group, whether they belong to this group or not. In 2003, Boneh and Silverberg showed that multilinear maps could be used to construct multi-party NIKE schemes \color{black} (see Definition \ref{mmapd} and Lemma \ref{rmmap}), but they also revealed that the bilinear Tate and Weil pairings could not be generalized to multilinear maps \cite{4}. So, they could not propose any concrete multilinear map \cite{4}. \\
	The design of a full-resilient, multi-party and non-interactive, key-exchange protocol remained an open problem until Garg, Gentry and Halevi \cite{5} introduced the first multilinear map (GGH13) based on lattices. After that, Langlois in 2014 presented a more efficient GGH map called GGHLite \cite{10}, and then, Albrecht et al.\ proposed the first practical MP-NIKE scheme based on GGHLite \cite{11}. This scheme, however, is inefficient in public parameters size and computational cost, as they declared in 80-bit security, computing a shared key between 7 users will take about 1.75 seconds on a 16-core CPU. In 2015, Hu and Jia showed that GGH and GGHLite are insecure \cite{13}. In 2015, Gentry et al.\ proposed a graph-induced multilinear map from lattices \cite{12}, which for simplicity we call it GGH15. Recently, Coron et al.\ showed that GGH15 is insecure too \cite{14}. Moreover, all of these multilinear maps are inefficient because they are based on using cryptographic operations with a high computational cost.
	Coron et al.\ in 2013, proposed another candidate construction of multilinear maps over integers \cite{15}, denoted by CLT13 for simplicity, which soon was broken by Cheon et al.\ \cite{16}. Then, Coron et al.\ fixed their scheme \cite{CLT15}, but this time Minaud and Fouque proposed an attack on this fixed scheme and downgraded it to the previous one \cite{17}.  
	Recently, Ma and Zhandry proposed another multilinear map based on CLT13 which is provably secure against previously known attacks, but its security its not proven in the standard security model \cite{MZ17}. Their scheme is a modified version of CLT13, so it is not efficient.\\
	
	Multilinear maps and MP-NIKE schemes can be constructed by using $i\mathcal{O}$ \cite{8}. Garg et al.\ proposed the first construction of $i\mathcal{O}$ for general boolean circuits by using multilinear maps \cite{6}, and then Rao \cite{3}, Yamakawa et al.\ \cite{7}, Boneh and Zhandry \cite{8} and Khurana et al.\cite{9} proposed some multilinear maps and MP-NIKE schemes based on $i\mathcal{O}$ and constrained Pseudo-Random Generator (PRG) \cite{9}. Since multilinear maps are needed to construct $i\mathcal{O}$ \cite{5}, creating a multilinear map using $i\mathcal{O}$ seems impractical. To the best of our knowledge, so far no provably secure and practical $i\mathcal{O}$ and multilinear maps have been proposed in the literature. So, all of the $i\mathcal{O}$-based and multilinear map-based MP-NIKE schemes are either impractical or insecure \cite{8,13,14,16,17}. \color{black} 
	
	In 2016, Chen et. al. proposed an identity-based MP-NIKE based on Witness Pseudo Random Function(WPRF) \cite{SOK}. Constructing a WPRF needs asymmetric cryptographic multilinear map \cite{Z14}. If we had access to an efficient multilinear map, we would use it to construct an MP-NIKE from the scratch \cite{4}.\\
	
	MP-NIKE schemes are also called Non-Interactive Conference Key Distribution System (NICKDS) schemes. In 1998 Blundo et al. \cite{Blundo98} proposed a $k$-secure $t$-conference NICKDS using multi-variable symmetric polynomials, which was secure against $k$ colluding users. In Blundo scheme, no $k$ colluding users do not obtain any information about any key of other users, but any $k+1$ colluding users can obtain key of all other users \cite{Blundo98}. The bound of security parameter $k$ in Blundo scheme with $n$ users is equal to $n-t$, where $t$ is the size of the conference, and each user needs to store a piece of information with the size of $\binom{k+t-1}{t-1}$ times the size of common key \cite{Blundo98}. It means that for calculating a 128-bit key, in 50-secure 50-conference NICKDS system (for 100 users), each user needs to store about $2^{103}$-bit data, which is infeasible.\\
	
	\section{Preliminaries}
	\label{Pre}
	In this section, we introduce the notations used in this paper, the necessary definitions, lemmas and the general model of Non-Interactive Key Exchange (NIKE).
	\color{black}\subsection {Notations}
	The Notations and abbreviations used in this paper are outlined in Table 1.
	\begin{table}
		\label{Notation}
		\caption{ \color{black}  Notations and abbreviations}
		\begin{center}
			%35 55 40 55
			\begin{tabular}{|p{40pt}|p{140pt}|}
				\hline
				\color{black} \textbf{Symbol} & \color{black}  \textbf{Description}    \\
				\hline
				 $ p$, $ q$, $z$ & Three large primes\\
				\hline
				$N$ & Modulus of computation\\
				\hline
				$\mathds{G}$ & A multiplicative subgroup of $\mathds{Z}^*_N$\\
				\hline
				$g$ & A generator of $\mathds{G}$\\
				\hline
				\textsf{O}$()$ & Random Oracle\\
				\hline
				$H()$ & A one-way hash function\\
				\hline
				$W$ & A group of users aiming to compute a shared key \\
				\hline
				$K_W$ & Shared key of the group $W$\\
				\hline
				$U_i$ & User $i$\\
				\hline
				$PP$ & Public parameters\\
				\hline
				$msk$ & The KGC's master secret key\\
				\hline
				$e_i$ & Public key of user $i$\\
				\hline
				$d_i$ & Private key of user $i$\\
				\hline
				$\gamma$ & Security parameter\\
				\hline
				$A$ & The adversary\\
				\hline
				$B$ & The challenger algorithm\\
				\hline
				$W_c$ & Challenge set\\
				\hline
				$s$ & $|W_c|$\\			
				\hline
			\end{tabular} \\\setlength\tabcolsep{5pt}

		\end{center}
	\end{table}

	\color{black}
	\subsection {NIKE General Model}
	\label{Gene}
	The general model of MP-NIKE consists of the following four algorithms.

	\begin{enumerate}
		\item \textit {\textbf{Setup}($\gamma $).} Given a security parameter $\gamma $ as input, the Key Generation Center (KGC) runs setup($\gamma $) to generate a master secret key $msk$ and a set of public parameters $PP$. Then, the KGC announces $PP$ to all users.
		
		\item \textit{ \textbf {KeyGen}($PP$, $msk$, $i$).} In this phase, the user $U_i$ is authenticated by the KGC, and the KGC generates a valid public/private key pair ($e_i$, $d_i$), saves them in its database and finally gives them to user $U_i$.
		\item \textit { \textbf {SharedKey}($PP$, $d_i$, $\{e_j: U_j \in W$, $j \neq i\} $).} Each member $U_i$ of group $W$ runs this algorithm to obtain a valid shared key $K_W$. All members of $W$ are able to compute this shared key, while other users which do not belong to $W$ cannot compute it.
		\item \textit{ \textbf{Join}($PP$, $K_W$, $e_s$).} Suppose there exists a group $W$ with a key $K_W$ shared among its members. When a new user $U_s$ wants to join this group to form a new group $W\acute{}=W \cup \{U_s\}$, it must run \textit { \textbf{SharedKey}$(PP$, $d_s$, $\{e_j: U_j \in W\acute{}$, $j \neq s \})$}, while the other members of $W$ can update the shared key $K_W$ to $K_{W\acute{}}$ with a lower computational  overhead. The output of this algorithm is a new shared key $K_{W\acute{}}$, where $W\acute{}=W \cup \{U_s\}$.
	\end{enumerate}
	\subsection {Definitions and Lemmas}
	To prove the security of our scheme, we need the following definitions and lemmas. 
	%1 
	%\begin{definition}(Safe modulus): \textit{Let $N=p\acute{} \cdot q\acute{}$ be the product of two large primes $p\acute{}$ and $q\acute{}$. Then, $N$ is called a safe RSA modulus if $p\acute{}=2pz+1$ and $q\acute{}=2q+1$, where $p$ and $q$ are also prime numbers.}
	%\end{definition}
	%1
	\begin{lemma}
		\label{SafeRSA} (Generators of a subgroup): \textit{Let $N=p\acute{}q\acute{}=(2pz+1)(2q+1)$ be a RSA modulus. Besides, assume that  $\mathds{G}$ is a subgroup of order $pzq$ in $\mathds{Z}_N^*$ and $\hat{\mathds{G}}$ is a subgroup of order $zq$ from $\mathds{G}$. Under these conditions, if $g$ is a generator of $\mathds{G}$ then $g_1=g^{p} \bmod N$ is a generator of $\hat{\mathds{G}}$ \cite{HB}.}
	\end{lemma}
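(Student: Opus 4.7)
The plan is to use the standard order-of-a-power formula in a cyclic group, together with the uniqueness of subgroups of a given order in a cyclic group. Since $g$ is given to generate $\mathbb{G}$, the subgroup $\mathbb{G}$ is cyclic of order $pzq$, so I can work entirely within this cyclic structure and reduce the claim to a short arithmetic computation.

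First I would recall the standard fact that if $g$ has order $n$ in a group, then for any integer $k$ the order of $g^k$ equals $n/\gcd(n,k)$. Applying this with $n=pzq$ and $k=p$ gives $\mathrm{ord}(g^p)=pzq/\gcd(pzq,p)$. Because $p$, $z$, $q$ are distinct primes (as dictated by the setup $N=(2pz+1)(2q+1)$ with $p',q'$ prime), we have $\gcd(pzq,p)=p$, so $\mathrm{ord}(g_1)=\mathrm{ord}(g^p)=zq$.

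Next I would invoke the fact that a cyclic group of order $n$ has a \emph{unique} subgroup of order $d$ for every divisor $d$ of $n$. Since $zq$ divides $pzq=|\mathbb{G}|$, there is exactly one subgroup of $\mathbb{G}$ of order $zq$; by hypothesis this subgroup is $\hat{\mathbb{G}}$. The cyclic subgroup $\langle g_1\rangle$ has order $zq$ by the previous step, so $\langle g_1\rangle=\hat{\mathbb{G}}$, which is precisely the statement that $g_1$ generates $\hat{\mathbb{G}}$.

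There is essentially no hard step here; the only thing to watch is that $p$, $z$, $q$ are pairwise distinct primes so that $\gcd(pzq,p)=p$ rather than a higher power of $p$. This is guaranteed by the safe-RSA style structure imposed on $N$, where $p'=2pz+1$ and $q'=2q+1$ are prime and $p,z,q$ are the large primes appearing in the factorisations of $p'-1$ and $q'-1$. Once this distinctness is observed, the proof is just the two lines above.
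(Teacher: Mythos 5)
Your argument is correct: the order formula $\mathrm{ord}(g^{p})=pzq/\gcd(pzq,p)=zq$ combined with the uniqueness of the subgroup of order $zq$ in the cyclic group $\mathds{G}=\langle g\rangle$ immediately gives $\langle g^{p}\rangle=\hat{\mathds{G}}$. The paper itself offers no proof of this lemma — it is stated with only a citation to the Handbook of Applied Cryptography — and what you have written is exactly the standard argument that reference supplies, including the one point genuinely worth flagging, namely that $p$, $z$, $q$ must be pairwise distinct so that $\gcd(pzq,p)=p$.
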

	%2
	\begin{definition} ($q$-th residue): \textit{Let $N=p\acute{} q\acute{}=(2pz+1)(2q+1)$ be a RSA modulus and $\mathds{G}$ be a subgroup of order $p z q$ in $\mathds{Z}_N^*$.  Then, an element $g_1$ is a $zq$-th residue in $\mathds{Z}^*_N$, if there exists at least one element $\alpha \in \mathds{G}$ such that ${\alpha}^{zq} \bmod N=g_1$.}
	\end{definition}
	%3
	
	\begin{definition} (The Fiat-Naor problem): 
		\label{FNP}	\textit{
			Let $N=\acute{p}\acute{q}=(2pz+1)(2q+1)$ be a RSA modulus. Besides, assume that $g$ is a private generator of the subgroup $\mathds{G}$ of $\mathds{Z}^{*}_{N}$ of order $q$. Under these conditions, and given $(y, g^y \bmod N, c)$, where $c$ is coprime to $y$, compute $g^c \bmod N$ \cite{FN}. It is assumed that the Fiat-Naor problem is intractable and equivalent to the root extraction in RSA modulus and RSA problem \cite{FN}. }
	\end{definition}
	\color{black}
	\begin{definition} (Multilinear map (mmap)): 
		\label{mmapd}	\textit{
			Let $\mathds{G}_1$ and $\mathds{G}_2$ be two multiplicative groups of the same prime order. The map $e:\mathds{G}_{1}^{n}\rightarrow \mathds{G}_{2}$ is an $n$-multilinear map, if it satisfies the following two properties \cite{4}:
			\begin{enumerate}
				\item Multilinearity. If $a_1,\dots, a_n \in \mathds{Z}$ and $x_1,\dots, x_n \in \mathds{G}_1$ then\\
				\begin{equation}
				e({x_1}^{a_1},\dots, {x_n}^{a_n})=e({x_1},\dots, {x_n})^{a_1\dots a_n}
				\end{equation}
				\item Non-degeneracy. Let $g \in \mathds{G}_1$ be a generator of $\mathds{G}_1$, then $e(g,\dots, g)$ must be a generator of $\mathds{G}_2$. 
			\end{enumerate} 
		}
	\end{definition}
	\begin{lemma}(Realization of MP-NIKE by using multilinear map (mmap)): 
		\label{rmmap}	\textit{An MP-NIKE scheme can be performed by using multilinear maps. Let the map $e:\mathds{G}_{1}^{n}\rightarrow \mathds{G}_{2}$ be an $n$-multilinear map, $g$ represent a generator of $\mathds{G}_1$ and $g_t=e(g,\dots ,g)$ be a generator of $\mathds{G}_2$. Suppose that members of group $W=\{U_i,\dots ,U_{n+1}\}$ with $n+1$ users want to compute a shared key $K_W$. Any user $U_i \in W$ chooses a random integer $a_i$ and publishes $g^{a_i}$ as its public key. Now any user $U_i \in W$ can compute the shared key $K_{W}$ as below.
			\begin{eqnarray}
			K_W=e({g}^{a_1},\dots, {g}^{a_{i-1}}, {g}^{a_{i+1}},\dots, {g}^{a_{n+1}})^{{a_i}}=g_{t}^{a_1\dots a_{n+1}}			
			\end{eqnarray}}
	\end{lemma}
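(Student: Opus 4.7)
The plan is to verify correctness of the prescribed key computation by a direct application of the multilinearity property, and then check that the value each user computes does not depend on which user performs the computation. The count matters: with $n+1$ users and an $n$-multilinear map, user $U_i$ removes its own public key from the list and feeds the remaining $n$ public keys into $e$, which matches the arity of $e$. So the first step is simply to confirm that the arguments passed to $e$ are well-typed.

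Next, I would apply multilinearity of $e$ repeatedly, pulling out each exponent $a_j$ (for $j\neq i$) from the $j$-th slot. Concretely, starting from $e(g^{a_1},\dots,g^{a_{i-1}},g^{a_{i+1}},\dots,g^{a_{n+1}})$, multilinearity in each slot yields
\begin{equation}
e(g^{a_1},\dots,g^{a_{i-1}},g^{a_{i+1}},\dots,g^{a_{n+1}}) \;=\; e(g,\dots,g)^{\prod_{j\neq i} a_j} \;=\; g_t^{\prod_{j\neq i} a_j}.
\end{equation}
Raising this to the exponent $a_i$ (user $U_i$'s private key) gives $g_t^{a_1 a_2 \cdots a_{n+1}}$, which is precisely the claimed $K_W$.

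The crucial observation is that the resulting exponent $a_1 a_2 \cdots a_{n+1}$ is symmetric in the indices $1,\dots,n+1$, so the value $K_W$ is independent of which user $U_i$ performs the computation. Hence every member of $W$ obtains the same group element, establishing correctness of the shared key. Non-degeneracy of $e$ guarantees that $g_t$ generates $\mathds{G}_2$, so $K_W$ ranges over a group large enough to serve as a cryptographic key whenever the $a_i$ are chosen so that their product is nonzero modulo the group order.

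I do not foresee a genuine obstacle here: the lemma is essentially a bookkeeping consequence of Definition \ref{mmapd}. The only thing to be careful about is the off-by-one in the arity (users number $n+1$ while the map is $n$-linear) and to state the symmetry argument explicitly so that agreement among all users is manifest rather than merely plausible.
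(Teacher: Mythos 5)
Your proposal is correct and follows exactly the route the paper takes: the paper treats the displayed identity in Lemma~\ref{rmmap} as an immediate consequence of the multilinearity property of Definition~\ref{mmapd} and offers no further proof, and your write-up simply makes explicit the slot-by-slot extraction of the exponents, the arity check ($n+1$ users, $n$-linear map), and the symmetry of the product $a_1\cdots a_{n+1}$ that guarantees all users derive the same key. Nothing is missing.
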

	\color{black}
	\subsection {MP-NIKE Security Model}
	\label{SecModel}
	We adopt the security model of Non-Interactive Conference Key Distribution (NICKD) of \cite{SJ8}, which is an extension of the Bellare-Rogaway security model \cite{Bellare} and \cite{Bresson}. In this security model, the adversary is allowed to use three types of oracles: \textsf{Test}, \textsf{Reveal} and \textsf{Corrupt}. The adversary can adaptively corrupt users of his choice and obtain corrupted users' keys by using \textsf{Corrupt} oracle. By using the \textsf{Reveal} oracle, the adversary can obtain the shared key of any arbitrary group \color{black} as below. The adversary gives an arbitrary group $W$ to the \textsf{Reveal} oracle as input, and then this oracle sends the repective shared key as output to the adversary. \color{black} 
	A \textsf{Corrupt} query can also obtain the information leaked by a \textsf{Reveal} query \cite{SJ8}, since the adversary can compute the shared key of any arbitrary group $W$ just by corrupting one user $U_i \in W$. So, we omit the \textsf{Reveal} oracle in our model.\\  
	
	To prove the security of an MP-NIKE scheme, by contradiction, we suppose there exists an adversary $A$ that can distinguish between a random bit string and the shared key of a specific group $W^*$ of arbitrary size $s$, which it does not access to private keys of its members. Then, we show that there exists an algorithm $B$ that can solve a hard problem by invoking algorithm $A$. In this model, at the first step, the adversary $A$ must commit to $s$, by announcing it to algorithm $B$. 
	%Suppose that there is an algorithm $C$ which gives the public parameters $PP$ and a public/private key pair to the $B$. The algorithm $B$ gives $s$ to the algorithm $C$ and then the algorithm $C$ gives a challenge set $W_c$ of size $s$ to the algorithm $B$.
	Now the algorithm $B$ as a challenger plays the following game with adversary $A$.\\
	For the sake of formalisation, we adapt the model of \cite{SJ8} to five phases as below. %
	\begin{enumerate}
		\item \textit{\textbf{Commit.}} At the first step, the adversary A must choose $s$ and then commit to $s$ by sending it to the algorithm $B$.
		
		\item \textit{\textbf{KeyGen.}}
		The algorithm $B$ generates $q_c$ sets $\langle W_1,\dots, W_{q_c} \rangle$, where each of them contains exactly $s$ valid public keys. It also generates the associated private keys and keeps them private. Finally, it gives these $q_c$ sets $\langle W_1,\dots, W_{q_c} \rangle$ in a random order to the adversary.
		
		\item \textit{\textbf{Phase 1.}} In this phase, the adversary is allowed to ask $q_c$ \textsf{Corrupt} queries and one \textsf{Test} query from the algorithm $B$. The formal definition of these oracles is as follows.
		\begin{itemize}
			\item \textsf{\textit{\textbf{Corrupt}($U_i$)}} \color{black} (or equivalently \textsf{\textit{\textbf{Corrupt}($e_i$)}}). \color{black} The adversary can corrupt any user $U_i$ adaptively \color{black}by using \textsf{Corrupt} oracle. The adversary gives $U_i$ as input to \textsf{Corrupt} oracle and then this oracle sends the respective private key $d_i$ as output to the adversary. 
			\color{black}
			\item \textsf{\textit{\textbf{Test}($W^*$).}} When the adversary decides to terminate Phase 1, it chooses one of the $q_c$ input sets of public keys, say $W^*$, such that for all $U_i \in W^*$, $U_i$ should not have appeared in none of the \textsf{Corrupt}($U_i$) queries, and then sends \textsf{Test}($W^*$) to the challenger. After receiving this query, the challenger generates a random bit $b \in \{ 0, 1\}$: if $b=0$ then the challenger sends $K_{W^*}$ to the adversary, otherwise, it generates a random string $\mathit{rand} \leftarrow \{ 0, 1\}^{\lambda}$ and sends it back to the adversary, where $\lambda$ is the bit length of $K_{W^*}$. 
		\end{itemize}
		
		\item \textit{\textbf{Phase 2.}} This phase is the same as Phase 1, except that the adversary does not have access to the \textsf{Test} oracle and it is not allowed to ask  \textsf{Corrupt}($U_i$), where $U_i \in W^*$. 
		
		\item \textit{\textbf{Guess.}} In this phase, the adversary guesses $b$ by a bit $b\acute{} \in \{0, 1\}$ and sends it to the challenger. If $b\acute{} = b$ the adversary wins the game.
	\end{enumerate}

	Finally, we give the following definition concerning the security offered by the MP-NIKE scheme.  
	
	\begin{definition} (Fully resilient MP-NIKE scheme): \textit{
			\label{ep}
			MP-NIKE scheme $\mathcal{E}$ is fully resilient ($q_c, T, \epsilon$)-secure, if in the above described game any adversary $A$, which is allowed to ask $q_c $ queries from \textsf{Corrupt} oracle, cannot distinguish a random string (i.e., $\mathit{rand}$ from the true shared key $K_{W^*}$) with an advantage greater than $\epsilon$ in time $T$.} 
		\begin{equation}
		Adv_A^\mathcal{E}=\Big| Pr[b\acute{}=b] - \frac{1}{2} \Big| \leq \epsilon .
		\end{equation}
	\end{definition}
	
	\section{The Fiat-Naor and the Eskeland NIKE schemes}
	\label{Previous}
	In this section, we briefly introduce the Eskeland NIKE scheme, but first we need to review the Fiat-Naor scheme. Then, we present a coalition attack against the Eskeland's NIKE scheme in which any two colluding users can compute the shared key of any other group of users.
	
	\subsection{The Fiat-Naor Scheme}
	This scheme is based on the intractability of the factorisation of RSA moduli and is secure against any adversary that has access to at most one public/private key pair. This scheme works as follows.
	
	The KGC generates an RSA modulus $N=pq$, where $p$ and $q$ are large primes, and then selects at random a generator $g$ from $\mathds{Z}_N^*$ and keeps it as a master secret key $msk$ for itself. To generate a valid public/private key for user $U_i$ the KGC selects at random a prime $y_i$ and then computes the private key of the user $U_i$ as $d_i=g^{y_i} \bmod N$. Finally, the KGC gives $e_i=y_i$ and $d_i$ as public key and private key to user $U_i$. Let $W$ be a group of users aiming to compute a shared secret key. User $U_i$, where $U_i \in W$, computes the shared key as follows.
	\begin{equation}
	K_W=d_i^{\prod\limits_{j: U_j \in W, j \neq i}e_j} \bmod N = g^{\prod\limits_{j: U_j \in W}y_j} \bmod N.
	\end{equation}
	Suppose that a new user $U_s$ wants to join $W$ to form a new group $W\acute{}=W \cup \{U_s\}$. It must run \textit{\textbf{SharedKey}}($PP$, $d_s$, $\{e_j: U_j \in W\acute{}, j \neq s \} $) to compute a new shared key as below.
	
	\begin{equation}
	K_{W\acute{}}=d_s^{\prod\limits_{j: U_j \in W}e_j}.
	\end{equation}
	Any other user $U_j$ who has already joined $W$ simply computes
	\begin{equation}
	K_{W\acute{}}={K_W}^{e_s} \bmod N.
	\end{equation} 
	
	The Fiat-Naor NIKE scheme is 1-resilient but insecure against collaboration of two or more adversaries. In other words, any adversary accessing at least two public/private key pairs can compute $g$ by using the Euclidean algorithm and can break the Fiat-Naor scheme \cite{FN}. Stated differently, given ($a, g^a \bmod N$) and ($b$, $g^b \bmod N$), the adversary can compute $g^{gcd(a, b)} \bmod N$, by using the Euclidean algorithm and performing a sequence of modular exponentiations on $g^b \bmod N$ and $g^a \bmod N$. Note that $a$ and $b$ are prime, so the result equals $g$. 
	
	\subsection{The Eskeland Scheme}
	Let $N=pq$ be an RSA modulus, $g$ be a public generator of $\mathds{Z}_N^*$ and $H()$ be a secure one-way hash function. The KGC selects at random a secret $u$. The public key of user $U_i$ is computed as $e_i=H($identity of user $U_i$) and its private key is generated by the KGC as $d_i=z_i u+v_i \varphi (N)$, where $z_i=e_i \bmod \varphi (N)$ and $v_i$ is a unique random element of $\mathds{Z}_N^*$. Then, any user $U_i \in W$ computes the pre-shared key of group $W$ as below.
	\begin{equation}
	K_W=g^{d_i \prod\limits_{j: U_j \in W, j \neq i}e_j} \bmod N=g^{u\prod\limits_{j: U_j \in W}z_j} \bmod N.
	\end{equation}
	\subsection{Attack on the Eskeland Scheme}
	\label{attack}
	Eskeland claimed that his scheme is fully resilient against  any number of colluding adversaries \cite{2}. The colluding adversaries have access to each other's private keys and also the shared key of the groups to which they belong. Nevertheless, they do not have access to the master secret key of the KGC, nor to the private key of honest users (i.e. secrecy of private keys -- Security Requirement-1 in \cite{2}). However, we perform an attack by removing the effect of multipliers of $\varphi(N)$ and show how the secrecy of groups keys (Security Requirement-2 in \cite{2}) is not guaranteed. Mathematically, we show this below: \\

	Let $e_i$ and $e_j$ be two public keys in the Eskeland scheme and $gcd(e_i, e_j)=1$. By using the extended Euclidean algorithm we can efficiently compute the integers $a$ and $b$ such that $ae_i-be_j=1$ \cite{19}.  
	
	\begin{lemma}
		\label{ESK}
		Let $e_i$ and $e_j$ be two (coprime) public keys in the Eskeland scheme. Besides, we assume that $a$ and $b$ are two integers such that $ae_i-be_j=1$. Then, it is satisfied that $(az_i-bz_j) \bmod \varphi (N)=1$. 
	\end{lemma}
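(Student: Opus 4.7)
The plan is to reduce the hypothesis $ae_i-be_j=1$ modulo $\varphi(N)$ and to exploit the definition $z_i=e_i\bmod\varphi(N)$ (and likewise for $z_j$) to rewrite everything in terms of $z_i$ and $z_j$. Concretely, I would first express each public key as
\begin{equation}
e_i = z_i + k_i\,\varphi(N), \qquad e_j = z_j + k_j\,\varphi(N),
\end{equation}
for suitable integers $k_i,k_j\in\mathds{Z}$ whose exact values are immaterial, since they will be absorbed into a multiple of $\varphi(N)$.

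Next, I would substitute these expressions into the identity $ae_i-be_j=1$ provided by the extended Euclidean algorithm, obtaining
\begin{equation}
1 \;=\; a\bigl(z_i+k_i\varphi(N)\bigr) - b\bigl(z_j+k_j\varphi(N)\bigr) \;=\; (az_i-bz_j) + (ak_i-bk_j)\,\varphi(N).
\end{equation}
Rearranging gives $az_i-bz_j = 1 - (ak_i-bk_j)\varphi(N)$, so that $az_i-bz_j \equiv 1 \pmod{\varphi(N)}$. Since in any RSA modulus $\varphi(N)>1$, the unique residue of $1$ modulo $\varphi(N)$ in the range $[0,\varphi(N))$ is $1$ itself, whence $(az_i-bz_j)\bmod\varphi(N)=1$, as claimed.

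There is essentially no technical obstacle in this lemma: the whole content is the observation that taking an integer linear combination commutes with reduction modulo $\varphi(N)$, so that the Bézout-type identity $ae_i-be_j=1$ over $\mathds{Z}$ transports verbatim to the corresponding identity among the residues $z_i,z_j$. The only minor subtlety worth flagging is that the coefficients $a$ and $b$ themselves are not reduced modulo $\varphi(N)$; they are the integers produced by the extended Euclidean algorithm on $(e_i,e_j)$, and the conclusion uses them as such. This observation is exactly what makes the lemma useful in the attack of Section~\ref{attack}, since it allows the colluding adversaries to cancel the $\varphi(N)$-multiples hidden in the Eskeland private keys without ever needing to know $\varphi(N)$.
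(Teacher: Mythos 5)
Your proof is correct and follows essentially the same route as the paper: write $e_i=z_i+k_i\varphi(N)$ and $e_j=z_j+k_j\varphi(N)$, substitute into $ae_i-be_j=1$, and absorb the $\varphi(N)$-multiples to get $az_i-bz_j\equiv 1\pmod{\varphi(N)}$. Your version is in fact slightly cleaner than the paper's, which compresses the final reduction step into a single (notationally loose) chain of equalities.
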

	
	\begin{proof}
		$e_i=z_i+k_i \varphi (N)$ for some integer $k_i$, and $e_j=z_j+k_j \varphi (N)$ for some integer $k_j$. If $ae_i-be_j=1$, then we have that 
		\begin{align}
		ae_i-be_j&=(az_i-bz_j)+(ak_i-bk_j)\varphi(N) \nonumber\\
		&=(az_i-bz_j) \bmod \varphi (N)=1.
		\end{align} 
	\end{proof}
	
	Based on Lemma \ref{ESK}, if $ae_i-be_j=1$, any two colluding users $U_i$ and $U_j$ can compute $\acute{u}=ad_i-bd_j$ which is equal to $u \bmod \varphi(N)$ as below.
	\begin{align}\begin{split}
	u\acute{}=ad_i-bd_j
	&=(az_i-bz_j)u+(av_i-bv_j) \varphi (N)\\
	&=(az_i-bz_j)u \bmod \varphi (N)=u \bmod \varphi(N) .
	\end{split}
	\end{align} 
	Then, given $\acute{u}$, the colluding users $U_i$ and $U_j$ can compute $g^u=g^{u\acute{}} \bmod N$. They can subsequently compute $g^{u\prod\limits_{U_i \in W\acute{}}e_i} \bmod N$ as the shared key of any arbitrary group $W\acute{}$.

	\section{Novel MP-NIKE Scheme}
	\label{prop}
	In this section, we present a new MP-NIKE scheme, which is secure against any coalition of malicious users. %The main idea of this scheme is inhibiting Euclidean algorithm by blinding $y_i$'s.
	
	The proposed scheme consists of four phases as below.
	
	\begin{enumerate}
		\item \textit{ \textbf{Setup($\gamma $).}} For a given security parameter $\gamma$, the KGC produces a RSA modulus $N=p\acute{} q\acute{}=(2pz+1)(2q+1)$, where $p$, $q$, $z$, $p\acute{}$ and $q\acute{}$ are large primes. Let $\mathds{G}$ be a subgroup of $\mathds{Z}_N^*$ of order $pzq$ and $m=\lceil \log_2 pzq \rceil $ be the bit length of $pzq$. The KGC selects at random a generator $g \in_R \mathds{G}$ and presents a hash function $H: \mathds{Z}^*_N\longrightarrow \{0, 1\}^{\lambda}$, i.e. the output of $H(\cdot)$ is a $\lambda$-bit string.
		
		Finally, the KGC publishes $\{ N, H(\cdot), g^p\}$ as the public parameters $PP$ and keeps the master secret key $msk=(p,z, q)$ for itself.
		\item \textit{ \textbf{KeyGen($PP$, $msk$).}} The KGC selects two uniformly random $m/2$-bit odd integers $y_i$ and $k_i$, and computes $e_i=(p y_i+zqk_i), d_i=g^{py_i} \bmod N$ and sends $(e_i, d_i)$ as a valid public/private key pair to user $U_i$.
		\item \textit{ \textbf{SharedKey($PP$, $d_i$, $\{e_j: U_j \in W$, $j \neq i\} $).}} Suppose $W$ is a subset of users of size $|W|$, aiming to obtain their preshared key $K_W$. Given the public key of other users of $W$, each user $U_i \in W$ computes $K_W$ as below.
		
		\begin{align}
		F_W&={d_i}^{\prod\limits_{j: U_j \in W, j \neq i}e_j} = {g^{\Big( p^{|W|} \prod\limits_{j: U_j \in W}y_j \Big)} \bmod N },\\
		K_W&=H(F_W).
		\end{align}
		
		\item \textit{ \textbf{Join($PP$, $F_W$, $e_s$).}} Suppose a user $U_s$ wants to join a group $W$ to extend it to $W\acute{}=W \cup \{U_s\}$. It must run \textit{\textbf{SharedKey}($PP$, $d_s$, $\{e_j: U_j \in W \} $)}, while other users in $W$ can simply compute the shared key of group $W\acute{}$ as below.
		\begin{align} 
		F_{W\acute{}}&=F_W^{e_s} \bmod N,\\
		K_{W\acute{}}&=H(F_{W\acute{}}).
		\end{align}
	\end{enumerate}
	
	Note that if $(e_i, d_i)$ and $(e_j, d_j)$ are valid public/private key pairs, then for any pair of integers $\alpha $ and $\beta $, $(\alpha e_i+\beta e_j, d_i^\alpha d_j^\beta  \bmod N)$ are also valid public/private key pairs.
	In the proposed scheme, like many other cryptographic schemes (e.g. Boneh and Franklin's identity-based encryption scheme \cite{BF}), the adversary can generate some random public/private key pairs from other valid public/private key pairs, but it cannot find the private key for a given public key.

	\subsection{The Proposed MP-NIKE as a Broadcast Encryption scheme}
	\label{app}
	In this section, we propose a broadcast encryption scheme based on the proposed MP-NIKE protocol. The proposed MP-NIKE protocol can be used to construct practical broadcasting encryption with low computation cost and low message overhead. The proposed scheme consists of three phases as below \cite{Brod}.
	
	\textit{ \textbf {Brod\_Setup}($ \eta, \gamma $).}
	The broadcasting server runs \textit{Setup($\gamma $)} to obtain the public parameters $PP$ and $msk$. Then, it generates $\eta$ public/private key pairs by runing \textit{KeyGen($PP$, $msk$)} procedure and saves them in its database. Finally the broadcasting server gives public/private key    pairs to $\eta$ users.\\
	
	\textit{\textbf{Brod\_Encrypt}($W, PP, M$).}
	Suppose the broadcasting server wants to encrypt a message $M$ for a group $W$ of authorised users. At the first step, it obtains a private key $d_i$ of one user $U_i$ where $U_i \in W$ from its database. Now the broadcasting server runs  \textit{SharedKey($PP$, $d_i$, $\{e_j: U_j \in W$, $j \neq i\} $)}    to obtain $K_{W}$. Then it encrypts a message $M$, for example by the AES algorithm, under the group key $K_{W}$ and sends the resulted ciphertext $CT = Enc_{K_W}(M)$ along with the list of public keys of authorised users $W$ over the broadcasting channel.
	
	\textit{\textbf{Brod\_Decrypt}($W, i, d_i, PP, CT$)}. 
	Any user $U_i$ from the authorized group $W$ computes \textit{SharedKey($PP$, $d_i$, $\{e_j: U_j \in W$, $j \neq i\} $)} and decrypts the ciphertext $CT$ by using $K_{W}$.\\
	
	In addition to broadcast encryption, the proposed scheme can be used in various applications such as IoT or WSN, where a set of smart devices (e.g. in the home or deployed in the countryside)  want to communicate securely.  In this scenario, the IoT devices may have different access permissions, and some messages must not be decryptable to some elements. Besides, when a node is compromised, it must not be able to decrypt any message. In this case, the other nodes of the network can compute a shared key for a group in which the compromised node is not a member.  
	
	\section{Security Analysis}
	\label{Security}
	In this section, we prove the security of our proposal in two steps. First, we show that our scheme is 1-resilient. Second, we demonstrate that if the proposed MP-NIKE scheme is 1-resilient then it is also full-resilient. This proof uses the random oracle model based on the adopted security model of Non-Interactive Conference Key Distribution (NICKD) of \cite{SJ8}, which is an extension to the Bellare-Rogaway security model \cite{Bellare} and \cite{Bresson}.

	\subsection{Step 1: The proposed scheme is 1-resilient}
	Theorem \ref{Th1} shows that the proposed scheme is 1-resilient and proves that given $PP$ and one public/private key pair, the adversary is not able to obtain private key of any other user.     
	\begin{theorem} %1
		\label{Th1}
		(The proposed scheme is 1-resilient): Let $N=p\acute{} q\acute{}=(2q+1)(2zp+1)$ be a RSA modulus. In addition, assume that $\mathds{G}$ is a subgroup of $\mathds{Z}^{*}_{N}$ of order $p zq$, and $g$ is a generator of $\mathds{G}$. Also, $\mathds{G_1}$ is a subgroup of $\mathds{Z}^{*}_{N}$ of order $zq$ and $g^p$ is a generator of $\mathds{G_1}$. Under these assumptions, if the Fiat-Naor problem is intractable, then the proposed scheme is 1-resilient. 
	\end{theorem}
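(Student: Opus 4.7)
The plan is to establish Theorem \ref{Th1} by a reduction to the Fiat-Naor problem of Definition \ref{FNP}. Concretely, I will assume for contradiction that there is a polynomial-time adversary $A$ which, given the public parameters $\{N, H, g^p\}$ together with a single valid pair $(e_i, d_i) = (py_i + zqk_i,\, g^{py_i} \bmod N)$, recovers the private key $d_j = g^{py_j} \bmod N$ corresponding to some other registered public key $e_j = py_j + zqk_j$ with non-negligible advantage $\epsilon$. From $A$ I will construct an algorithm $B$ that solves an arbitrary Fiat-Naor instance, which contradicts the stated intractability assumption.

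The core technical move is to embed the Fiat-Naor challenge inside $A$'s view of the simulated scheme. Given an instance $(y, h^y \bmod N, c)$ with $\gcd(y,c)=1$, the reduction $B$ identifies the Fiat-Naor generator $h$ with the public element $g^p$ of the proposed scheme; by Lemma \ref{SafeRSA}, $g^p$ generates the order-$zq$ subgroup of $\mathds{Z}_N^*$, and root extraction in this subgroup is of exactly the Fiat-Naor form on $N$. Then $B$ simulates the KGC by publishing $PP = \{N, H, h\}$ and handing $A$ the pair $(e_i^\ast, h^y)$, where $e_i^\ast$ is sampled uniformly from the integer range used by \textit{KeyGen}. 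Because a real $e_i = py_i + zqk_i$ with a freshly chosen odd $m/2$-bit $k_i$ is statistically close to a uniform integer of the same bit length from $A$'s point of view --- $A$ has no access to $(p, z, q)$ --- this simulation is indistinguishable from the real 1-resilience experiment. Next, $B$ presents $A$ with a target public key $e_j^\ast$ whose implicit $y_j$-component is forced to satisfy $y_j \equiv c \pmod{zq}$, so that any correct $d_j = h^{y_j}$ produced by $A$ equals $h^c \bmod N$, which $B$ returns as its Fiat-Naor solution.

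The hardest part, as I anticipate, is this final embedding: $e_j^\ast$ must simultaneously encode $c$ and look like a fresh output of \textit{KeyGen} to $A$, even though $B$ does not possess the factorization $(p, z, q)$ of $N$ and therefore cannot directly compute residues modulo $zq$. I plan to address this by exploiting the closure property noted immediately after the scheme description --- that $(\alpha e_i + \beta e_j,\, d_i^{\alpha} d_j^{\beta} \bmod N)$ is itself a valid key pair for arbitrary integers $\alpha, \beta$ --- combined with controlled programming of the random oracle on the identity-to-public-key queries issued during the simulation. Once a successful embedding is in place, a standard probability accounting yields an advantage for $B$ of at least $\epsilon$ minus a negligible simulation error, contradicting the intractability of the Fiat-Naor problem and thereby establishing 1-resilience.
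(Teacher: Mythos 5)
Your overall strategy coincides with the paper's: assume a 1-resilience adversary $A$, identify the Fiat-Naor generator with $g^p$ (the generator of the order-$zq$ subgroup $\mathds{G}_1$), simulate $A$'s input, and return $A$'s forged private key as the Fiat-Naor solution. However, two steps of your plan do not go through as described. First, you propose to hand $A$ the pair $(e_i^\ast, h^{y})$ with $e_i^\ast$ sampled uniformly from the range used by \textit{KeyGen}. What matters is not the marginal distribution of $e_i^\ast$ but the joint distribution of the pair: a valid pair must satisfy $e_i^\ast = p y_i + zq k_i$ with $d_i^\ast = g^{p y_i}$, and a uniformly sampled $e_i^\ast$ paired with $h^{y}$ is almost surely \emph{not} a valid key pair. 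The adversary is only guaranteed to succeed on correctly distributed valid inputs, so its advantage does not transfer; "$A$ cannot factor $N$" is not an argument that $A$ behaves identically on invalid pairs. Second, and more seriously, the embedding of the target exponent $c$ into a challenge public key $e_j^\ast$ — which you correctly identify as the crux — is left unresolved, and the tools you name cannot resolve it. Starting from the single pair $(e_i, d_i)$, the closure property $(\alpha e_i, d_i^{\alpha})$ only reaches public keys whose $y$-component is a multiple of $y$ modulo $zq$; forcing $y_j \equiv c \pmod{zq}$ would require computing $c\,y^{-1} \bmod zq$, i.e., knowing the factorization you are assuming $B$ lacks. Moreover, there are no ``identity-to-public-key'' random-oracle queries to program in this scheme: $H$ is applied only to the group element $F_W$ to derive $K_W$, and public keys are integers issued by the KGC, not hash outputs (you may be conflating this with the Eskeland scheme, where $e_i = H(\mathrm{identity})$).

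The missing ingredient, which is exactly what the paper's proof supplies, is the auxiliary element $e_r = p + k_r zq$, given to the reduction as input (and computable as $(N-1)/2 = p + (2p+1)q$ when a safe modulus is used). Since $(e_r, g^p)$ is itself a valid public/private key pair with $y$-component $1$, the reduction can set $e_{\alpha} := e_r\, y = p y + zq(k_r y)$, a valid public key whose private key is precisely the given $h^{y} = g^{p y}$, and $e_c := e_r\, c = p c + zq(k_r c)$, a valid challenge public key whose private key is precisely $h^{c} = g^{p c}$. This simultaneously fixes both gaps: the simulated pair handed to $A$ is genuinely valid, and the challenge key encodes $c$ without any knowledge of $(p, z, q)$, so $A$'s output is literally the Fiat-Naor answer. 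Without $e_r$ (or an equivalent publicly computable key for $g^p$), your reduction cannot be completed.
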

	\begin{proof}
		Suppose there is an adversary $A$ that given the public parameters $PP$ and one public/private key pair ($e_{\alpha }=y_{\alpha} p+k_{\alpha }zq$, $d_{\alpha}=g^{py_{\alpha} } \bmod N$), can obtain the private key of a challenge public key $e_c=y_cp+k_czq \in \mathds{Z}_N^*$, which is equal to $d_{c}=g^{py_c} \bmod N$. Then, we show that there is an algorithm $B$ that can solve an instance of the Fiat-Naor problem over the group $\mathds{G_1}$ with generator $g^p \bmod N$.
		
		\textit{\textbf{Phase $1$.}}  Algorithm $B$ is given the modulus of computation $N$ as well as an instance of the Fiat-Naor problem $(y_\alpha$, $(g^p)^{y_\alpha})$ and $e_r=p+k_rzq$ as input, \color{black} and is asked to output the Fiat-Naor private key for a specific public key $y_c$ (which is supposed to be $g^{py_c} \bmod N$). Note that if someone choose a safe RSA modulus ($N=(2p+1)(2q+1))$ the $e_r$ can be easily computed as $e_r=(N-1)/2=2pq+p+q=p+(2p+1)q$ so giving it as input to algorithm $B$ does not give any information. 
		
		Algorithm $B$ first transforms the given Fiat-Naor key pair to a NIKE key pair. To do this, it computes $e_{\alpha}=e_ry_\alpha=p(y_\alpha)+zq(k_ry_\alpha)$. Note that $g^{py_\alpha} \bmod N$ is the corresponding NIKE private key for the public key $e_\alpha$.
		Moreover, algorithm $B$ changes the given challenge $y_c$ to a valid public key of the NIKE scheme $e_{c}=e_ry_{c}=p(y_{c})+zq(y_{c}(k_r))$. Finally it sends $e_{\alpha}, g^{py_\alpha}$ and $e_{c}$ as a valid key pair and a challenge public key in the proposed NIKE scheme to the adversary $A$.\\	
		\textit{\textbf{Phase $2$.}} Suppose the adversary $A$ is can compute the private key $d_{c}=g^{py_{c}} \bmod N$ and sends it back to $B$. It allows $B$ to solve the given instance of the Fiat-Naor problem. In detail, $B$ outputs $g^{py_{c}} \bmod N$, which is the answer to the given Fiat-Naor problem.	\color{black}
	\end{proof} 
	
	%\begin{cor} 
	%   In the proposed NIKE scheme, given a public/private key pair $(e_{\alpha}=y_{\alpha}p+k_{\alpha}q, d_{\alpha}=g^{py_{\alpha}} \bmod N)$ the adversary cannot compute the secret $y_{c_{i}}$ corresponding to challenged public key $e_{c_{i}}=y_{c_{i}}p+k_{c_{i}}q$
	%\end{cor}
	%\begin{proof}
	%   We prove the Lemma by contradiction. Suppose the adversary can compute $y_{c_{i}}$ for some public key $e_{c_{i}} = y_{c_{i}}p + k_{c_{i}}q$. Then, since $g^p$ is known to the adversary, s/he can compute $d_{c_{i}}={g^p}^{y_{c_{i}}} \bmod N$. The above contradicts the 1-resiliency of the proposed NIKE scheme proved in Theorem 1.    
	%\end{proof}
	In Theorem \ref{Th2}, we prove that in the proposed scheme an adversary that has access to several public/private key pairs does not have any advantage over the adversary that has access to only one public/private key pair, public parameters and the generator $g^p$. Note that since the adversary can compute $g^p \bmod N$, including $g^p \bmod N$ in the public parameters $PP$ does not cause any security flaw to the scheme.
	\subsection{Step 2: The proposed scheme is full-resilient}
	In Theorem \ref{Th2}, we prove that the proposed scheme is full-resilient in the random oracle model. In other words, it formally shows that in the proposed solution no coalition of adversaries can obtain other users' private keys or compute the shared key of some group $W\acute{}$ of which they are not valid members.
	Note that if the adversary can compute the private key of any member of the group, it will be able to calculate the shared key of the group too. Thus, Theorem \ref{Th2} also proves that the adversary is not able to compute the private key for another specific public key.
	
	\begin{theorem}
		\label{Th2}
		In the proposed MP-NIKE scheme, suppose that the hash function $H$ is modelled as a random oracle \textsf{O}. Let $A$ be an adversary who is allowed to ask $q_c$ queries from the \textsf{Corrupt} oracle. Besides s/he has advantage $\epsilon$ to distinguish a random string from the shared key of a group $W_c$ of size $s$ and has no access to the private keys of the group members. Then there is an algorithm $B$ that has at least an advantage of $\frac{\epsilon}{ e(q_c+1)}$ against the proposed scheme given only one public/private key pair.
	\end{theorem}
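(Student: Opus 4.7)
The plan is to construct $B$ as a 1-resilient attacker that drives the full-resilient adversary $A$ as a subroutine. $B$ is given the public parameters and a single valid pair $(e_\alpha,d_\alpha)$, plus a challenge public key $e_c$; it must output the corresponding private key $d_c=g^{py_c}\bmod N$. It simulates the security game of Section \ref{SecModel} so that any distinguishing advantage of $A$ translates into a 1-resilience break, which by Theorem \ref{Th1} breaks the Fiat-Naor assumption.

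The central tool is the linear-combination property stated right after the scheme description: $(\alpha e+\beta e',\, d^\alpha d'^\beta)$ is a valid key pair whenever $(e,d)$ and $(e',d')$ are valid. $B$ produces the $q_c$ sets $W_1,\dots ,W_{q_c}$, and for each $W_i$ independently tosses a biased coin that labels the set \emph{normal} with probability $1-p$ or \emph{embedded} with probability $p$, where $p=1/(q_c+1)$. In a normal set, $B$ generates all $s$ public keys as random integer combinations of $(e_\alpha,d_\alpha)$ alone, so it knows every private key. In an embedded set, at least one public key is forced to include a nonzero contribution of $e_c$, so $B$ learns valid-looking public keys but cannot compute the corresponding private keys nor the raw shared value $F_{W_i}$.

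$B$ simulates the random oracle $\mathsf{O}$ lazily, returning fresh uniform $\lambda$-bit strings. A \textsf{Corrupt} query on a normal-set user is answered with the stored private key; a \textsf{Corrupt} on an embedded-set user forces $B$ to abort. On \textsf{Test}$(W^*)$, $B$ aborts unless $W^*$ is embedded, in which case it returns a uniform $\lambda$-bit string, perfectly mimicking the $b=1$ branch of the real game. Conditioned on not aborting, $A$'s view is statistically identical to the real game, so since $H$ is a random oracle the only way $A$ can obtain any bit of $K_{W^*}$ other than by guessing is to query $\mathsf{O}$ on the true preimage $F_{W^*}$; this must happen with probability at least $\epsilon$. $B$ then scans its oracle log and, using its knowledge of the embedding exponents in $W^*$, recovers $d_c$ from the matching query.

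The advantage bound follows from a standard coin-flip analysis. Because $A$'s choices are independent of the set labels until an abort actually occurs, the probability of no abort is at least $(1-p)^{q_c}\cdot p$: at most $q_c$ distinct sets can contain corrupted users, and the test set is distinct from all of them. Optimizing over $p$ gives $(1-\tfrac{1}{q_c+1})^{q_c}\cdot\tfrac{1}{q_c+1}\ge 1/(e(q_c+1))$, so $B$'s overall advantage is at least $\epsilon/(e(q_c+1))$. The hard part is the extraction step: the embedded-set keys must be engineered so that $F_{W^*}$, when finally exposed in the oracle log, algebraically isolates $g^{py_c}$ without forcing $B$ to invert anything modulo the secret group order $pzq$. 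This is what dictates the precise shape of the $(\alpha,\beta)$ exponents used in embedded keys, and it is the one place where the reduction must be designed with real care rather than by generic simulation boilerplate.
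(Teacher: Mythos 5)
Your overall strategy is the same Coron/Boneh--Franklin partitioning argument the paper uses: ``normal'' sets whose private keys the simulator knows versus ``embedded'' sets it cannot open, aborts on bad \textsf{Corrupt}/\textsf{Test} queries, and the bound $\left(1-\tfrac{1}{q_c+1}\right)^{q_c}\cdot\tfrac{1}{q_c+1}\ge \tfrac{1}{e(q_c+1)}$. That part is fine. The genuine gap is in what you ask $B$ to output and, consequently, in the extraction step that you yourself flag as ``the hard part'' but never carry out. You have $B$ embed a challenge public key $e_c$ into the test set and then recover the private key $d_c=g^{py_c}\bmod N$ from the logged query $F_{W^*}$. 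But if the embedded keys are linear combinations $e_i=r_ie_r+b_ie_\alpha+c_ie_c$, then $F_{W^*}=g^{p^{s}\prod_i(r_i+y_\alpha b_i+y_c c_i)}\bmod N$, and isolating $g^{py_c}$ from this requires computing an $X$-th root in the hidden-order subgroup for a known integer $X$ (e.g.\ $X=p^{s-1}c_1\prod_{i\ge 2}(r_i+y_\alpha b_i)$ when only one key carries $e_c$). Root extraction modulo $N$ is exactly the problem the whole construction assumes intractable, and no choice of embedding coefficients helps, because $B$ cannot invert anything modulo the secret order $zq$. So, as written, the reduction cannot produce its final output.

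The paper sidesteps this by changing $B$'s goal: $B$ receives the entire challenge set $W_c$ as $s$ public keys and must output the \emph{shared key} $K_{W_c}$, not a private key for a single challenge public key. It hides the given $W_c$ among the $q_c$ self-generated sets (rather than flipping per-set coins, though the probability calculation comes out the same), and in the Guess phase it needs no inversion at all: it identifies which oracle query equals $F_{W_c}$ by testing the publicly checkable relation $F_i^{\,e_\alpha}=d_\alpha^{\prod_{k}e_{c_k}}\bmod N$ over the oracle log, and outputs the corresponding hash value. To repair your argument you would need either to switch $B$'s target to the group key and supply such a verification equation, or to exhibit an embedding from which $g^{py_c}$ is recoverable without a root computation; the latter does not appear to be possible.
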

	
	\begin{proof}%2
		Suppose there is an adversary $A$ that given the public parameters $PP$ and several public/private key pairs can break the proposed scheme. Then we construct an algorithm $B$ that can break the security of the proposed scheme given only one public/private key pair (which contradicts Theorem 1). Suppose there is an algorithm $C$ which gives the public parameters $PP$ and a public/private key pair to algorithm $B$ as input and asks this algorithm to compute the shared key for a specific group $W_c$. Then, the algorithm $B$ as the challenger for $A$ gives $PP$ to $A$ and tries to respond to \textsf{Corrupt} queries issued by $A$. \color{black} Then the algorithm $B$ tries \color{black} to find the shared key for the group $W_c$ from the messages of adversary $A$. \color{black} In other words, the algorithm $B$ is challenged by a set $W_c$ of $s$ public keys and has to output the shared key $K_{W_c}$ as the response to this challenge (Note that $K_{W_c}$ is the shared key which the owners of these public keys can compute in the proposed MP-NIKE scheme). Algorithm $B$ plays the next game with adversary $A$ who can calculate the shared key of some group $W^*$ if it is allowed to receive $q_c$ private keys corresponding to $q_c$ public keys chosen by itself. In this game algorithm $B$, with some probability, responds to the Corrupt queries issued by $A$ and attempting to provide the $q_c$ private keys requested by $A$. Moreover, adversary $A$ is forced to obtain its required hash values only through sending requests to the random oracle which is controlled by algorithm $B$. Thus, by storing and using the values that  the adversary $A$ has asked for their hashes, the algorithm $B$ would find the key corresponding to group $W_c$.

		\color{black} For the sake of simplicity, we suppose that before starting the game the adversary $A$ must commit to the size of the group $W^*$, which it wants to attack. For computing the shared key for this specific group, the algorithm $B$ as the challenger plays the following game with the adversary $A$: 
		\begin{enumerate}
			\item \textit{\textbf{Commit.}} At the first step, the adversary chooses $s$, which is the size of the group $W^*$, and then commits to $s$.\\ 
			\item \textit{\textbf{Initialization.}} Let $N=p\acute{} q\acute{}=(2q+1)(2pz+1)$ be a RSA modulus, $\mathds{G}$ be a subgroup of $\mathds{Z}_N^*$ of order $pzq$. The algorithm $B$ sends $s$ to the algorithm $C$ and then $C$ gives the public parameters $PP = \{N, g^p \bmod N, H\}$, a challenge set $W_c=\{  e_{c_1}, e_{c_2},\ldots, e_{c_s} \}$, where  $e_{c_i}=y_{c_i} p+k_{c_i} zq$ for $i = 1, 2,\ldots, s$ are valid public keys, as well as a public/private key pair ($e_{\alpha}=y_\alpha p+k_\alpha zq, d_{\alpha}=g^{py_\alpha} \bmod N$) and $e_r$ to the algorithm $B$ as input. Note that if someone choose a safe RSA modulus ($N=(2p+1)(2q+1))$ the $e_r$ can be easily computed as $e_r=(N-1)/2=2pq+p+q=p+(2p+1)q$ so giving it as input to algorithm $B$ does not make security problem. 
			
			%Let $e_r=(N-1)/2=2pzq+p+q=p+(2pz+1)q=p+k_rq$, so
			The $(e_r,  g^p \bmod N)$ is a valid public/private key pair. Finally, the algorithm $B$ sends the public parameters $PP$ to the adversary $A$. 
			
			Throughout the next phase, we use a random oracle \textsf{O}$(\cdot)$ to simulate the hash function $H$. The challenger controls this random oracle, and the adversary can obtain the hash value for any arbitrary string by asking it from this random oracle. In other words, the adversary $A$ gives $F_W$ to the random oracle and then the random oracle generates a unique random string, saves it in its database and finally sends this string to the adversary.
			
			\textit {\textbf {Random Oracle}} \textsf{O}$(F_W)$.
			The value of $F_W$ is given to the random oracle \textsf{O}$(\cdot)$ as input to generate $K_W$ which is the hash value of $F_W$. Upon receiving a new query, the challenger first searches its database, if it finds a tuple matching that value, it will return the respective $K_W$ to the adversary. If, however, there is no matching tuple in its database, it will choose a random $\lambda$-bit string $K_W$ and returns it to the adversary. Then it saves the tuple $\langle F_W$, $K_W \rangle $ into its database.\\
			
			\item \textit{\textbf{KeyGen.}} The challenger generates a list $list$, which is empty at bigining, and generates $q_c$ sets of public keys $W_1,\dots, W_{q_c}$, of size $s$ each, as below.\\
			For generating each set $W_j$, the challenger chooses $2s$ $m$-bit random integers $b_{i, j}$ and $r_{i, j}$, for $i \in \{1,\dots, s\}$. Then, it generates tuples $\langle e_{1, j}, d_{1, j} \rangle$,\dots, $\langle e_{s, j}, d_{s, j} \rangle$ as below, and saves them in its list.
			\begin{align}
			e_{i, j}&=r_{i, j}e_r + b_{i, j}e_{\alpha} \nonumber
			\\& = p(r_{i, j}+y_\alpha  b_{i, j})+zq(k_\alpha b_{i, j}+k_r r_{i, j}) \nonumber
			\\& = y\acute{}_{i, j} p+k\acute{}_{i, j} zq,\\
			d_{i, j}&=(g^{p})^{r_{i, j}}(g^{py_\alpha})^{b_{i, j}} \bmod N \nonumber
			\\& = g^{p(b_{i, j} y_\alpha +r_{i, j})} \bmod N.
			\end{align}
			where, $k\acute{}_{i, j}$ and $y\acute{}_{i, j}$ are some unknown random integers.\\            
			Finally, the challenger includes the public keys $e_{1, j}$,\dots,  $e_{s, j}$ in $W_j$ and sends the $q_c+1$ sets $\langle W_1,\dots, W_{q_c}\rangle$ and $ W_c$ in a random order to the adversary. Note that the adversary is not able to distinguish $W_c$ from other sets. 
			
			\item \textit{\textbf{Phase 1.}} The adversary is allowed to ask $q_c$ queries from \textsf{Corrupt} oracle as well as one query from the \textsf{Test} oracle. These oracles are controlled by the challenger $B$.
			\begin{itemize}
				\item \textsf{\textit{\textbf{Corrupt}($e_i$).}} Upon receiving the \textsf{Corrupt}($e_{i}$) query, where $e_{i} \in W_1\cup  W_2 \dots, W_{q_c}\cup W_c$, the challenger searches its list $list$ to find the respective \color{black} set \color{black} $W_j$, where $e_{i} \in W_j$. If $W_j = W_c$ rejects the query and the game is finished (because in this case, $B$ is unable to compute the corresponding private key). Otherwise, the challenger  retrieves $\langle e_i, d_i \rangle$ from $list$ and sends $d_{i}$ back to the adversary.
				
				\item \textsf{\textit{\textbf{Test}($W^*$).}} When the adversary decides to end Phase 1, it chooses one of $q_c+1$ sets of public keys, such that for all $e_i \in W^*$, $e_i$ must appear in none of the \textsf{Corrupt}($e_i$) queries. Then the adversary sends \textsf{Test}($W^*$) to the challenger. After receiving this query, the challenger searches its list $list$ to find respective set $W_j$. If $W_j \neq W_c$ it sets $b=0$, reject the query and terminates the game. Since it means that challenger can compute the shared key $K_{W^*}$ by itself. 
				\onecolumn 
				\begin{table}
					\label{TableComp}
					\caption{ Comparison of NIKE schemes in terms of security and efficiency}
					\begin{center}
						%35 55 40 55
						\begin{tabular}{|p{90pt}|p{70pt}|p{85 pt}|p{110pt}|p{92pt}|}
							\hline
							\centering\textbf{Scheme} & \centering\textbf{Security} & \centering\textbf{Computational  complexity} & \centering\textbf{Based on} &  \color{black} \textbf{Memory complexity} \\
							
							\hline
							Interactive Two-Party key agreement \cite{X509} & Secure  & 3.29 s in single-hop network & ECC-implicit X.509 certificates$^{+}$& $O(1)$ \\
							
							\hline
							Interactive Two-Party key agreement \cite{X509} & Secure  & 10.41 s in single-hop network & ECDSA-explicit X.509 certificates, DER format$^{++}$&   $O(1)$ \\
							
							\hline
							Interactive Two-Party key agreement \cite{X509} & Secure  & 14.05 s in single-hop network & ECDSA-explicit X.509 certificates, PEM format $^{++}$&  $O(1)$  \\
							
							\hline
							GGH \cite{5} & Not proven \& broken by \cite{13} \  & More than 33 s for 19 parties*** & Lattice&  $O(\lambda^5 \log(\lambda))$ $\dagger$  \\
							
							\hline	
							GGHLite	\cite{10}, \cite{11} & Not proven \& broken by \cite{13}& 1.75 s for 7 parties on 16-core CPU & Lattice  &  $O(\lambda \log^2(\lambda))$  \\
							
							\hline
							GGH15 \cite{12} & Not proven \& broken by \cite{14} & More than 33 s for 19 parties*** & Lattice &  $\Omega(d^5\lambda^2\log^4(d\lambda))$ ($d$ is diameter of graph)\\
							
							\hline
							CLT13 \cite{15}	  & Not proven \& broken by \cite{16} & 134 s for 19 parties** & CRT &   $O(\lambda^2 \log(\lambda))$  \\
							\hline
							
							5Gen extension of CLT13 \cite{5Gen}	  & Not proven  & 33 s for 19 parties** & CRT &   $O(\lambda^2 \log(\lambda))$    \\
							
							\hline
							CLT15 \cite{CLT15}& Not proven \& broken by \cite{17} & More than 33 s for 19 parties*** & CRT  &   $O(\lambda^2 \log(\lambda))$  \\
							
							\hline
							Rao \cite{3}  & Secure in the standard model & More than 33 s for 19 parties* & $i\mathcal{O} $&  $O(poly(\lambda))$  \\
							
							\hline
							Yamakawa et al. \cite{7}  & Secure in the standard model & More than 33 s for 19 parties*& $i\mathcal{O}$ & $O(poly(\lambda))$  \\
							
							\hline
							Boneh and Zhandry \cite{8} & Secure in the standard model & More than 33 s for 19 parties*& $i\mathcal{O}$ &  $O(poly(\lambda))$   \\
							
							\hline
							Khurana et al. \cite{9} & Secure in the standard model & More than 33 s for 19 parties* & $i\mathcal{O}$  &  $O(poly(\lambda))$  \\
							
							\hline
							Ma and Zhandry \cite{MZ17}& Not proven \& not broken & More than 33 s for 19 parties***& CRT&  $O(\lambda^3)$   \\ 
							
							\hline
							Blundo et al. \cite{Blundo98}& $k$-secure in the standard model & More than 33 s for 19 parties*** & Symmetric polynomials & $\binom{k+t-1}{t-1}^{+++}$  \\
							
							\hline
							Eskeland \cite{2}& Not proven \& \textbf{broken in this paper} & 281 ms for 19 parties on CC2538 Chip& Root extraction in RSA modulus& $O(1)$   \\
							
							\hline
							Proposed scheme & \textbf{Secure in the random oracle model} & 281 ms for 19 parties on CC2538 Chip& Root extraction in RSA modulus & $O(1)$  \\
							\hline
						\end{tabular} \\\setlength\tabcolsep{1pt}
						\begin{flushleft} 
* The $i\mathcal{O}$-based schemes seem to be imperactical because multilinear maps are needed to construct $i\mathcal{O}$ \cite{5}, while multilinear maps can be used to construct MP-NIKE scheme by itself \cite{4}. \\
** Google Compute Engine servers with a 32-core CPU at 2.5 GHz, 208 GB RAM, and 100 GB disk storage.\\
*** The 5Gen extension of CLT13 is the most efficient multilinear map \cite{5Gen}, \cite{weakCLT13}, so other multilinear maps take more than 33 seconds.\\
$^{+}$ ECDH key exchange certified using the implicit Elliptic Curve Qu-Vanstone (ECQV) certificates \cite{ECQV} (Using CC2538 chip and IEEE 802.15.4e technology).\\
$^{++}$  ECDH key exchange with public coefficients certified using the ECC Digital Signature Algorithm (ECDSA) encoded through the standard Privacy Enhanced Mail (PEM) format and the binary Distinguished Encoding Rules (DER) format (Using CC2538 chip and IEEE 802.15.4e technology).\\
\color{black} $^{+++}$ $t$ is the size of conference and $d$ is the degree of resiliency. \\
$\dagger$ $\lambda$ is the security parameter.\color{black}
\end{flushleft}
					\end{center}
				\end{table}
				
				\twocolumn
				Based on the security model of Section \ref{SecModel}, it must send a shared key $K_{W^*}$ to the adversary, but since by playing the rest of the game the challenger will not learn any useful information from the adversary, the challenger finishes the game. Otherwise, if $W_j = W_c$, then the challenger sets $b=1$, which means it cannot compute $K_{W^*}$ by itself and it must send a random string to the adversary to employ the response of the adversary for computing $K_{W^*}$. So, the challenger generates a random string $\mathit{rand} \leftarrow \{0, 1\}^{\lambda}$ and sends it back to the adversary.
			\end{itemize}
			
			\item \textit{\textbf{Phase 2.}} This phase is the same as Phase 1, except that the adversary does not access the \textsf{Test} oracle and it is not allowed to send \textsf{Corrupt}($e_i$), where $e_i \in W^*$. Moreover, it is allowed to send $q_H$ queries, to the random oracle \textsf{O}, aggregately in Phase 1 and 2. 
			
			\item \textit{\textbf{Guess.}} In this phase, the adversary must guess a bit $b\acute{} \in \{0, 1\}$ and send it to the algorithm $B$. If $b\acute{}=b$ it means that the adversary has previously obtained $K_{W^*}$ by sending \textsf{O}($F_{W^*}$) to the random oracle. So, the tuple $ \langle F_{W^*}, K_{W^*} \rangle $ is available in the random oracle database. Now the challenger must determine it. 	
			The challenger checks that if ${F_i}^{e_{\alpha}}={d_{\alpha}}^{\prod\limits_{k: U_{c_k} \in W_c}e_{c_k}}$, for all tuples $ \langle F_i, K_i \rangle $ of database of random oracle \textsf{O}. Then it outputs $H(F_{W_i})$ as the shared key of the group $W_c$.\\
			
			\textbf{Analysis.} The game will be successfully terminated if the challenger rejects none of the adversary queries. The probability of rejecting none of the $q_c$ \textsf{Corrupt} queries is $\delta^{q_c}$, where $\delta = \frac{sq_c}{(q_{c}+1)s}=\frac{q_c}{q_{c}+1}$ and the probability of not rejecting the \textsf{Test} query is at least equal to $(\frac{1}{q_c+1})= (1-\delta)$. On the other hand, the probability of responding to all the $q_H$ random oracle queries is 1. Hence, the game will be successfully terminated with probability ${(\frac{q_c}{q_c+1})}^{q_c}(1-\frac{q_c}{q_c+1})$. 
			%The value of pr is maximized at $\delta_{opt}=\frac{q_c}{q_c+1}$.
			
			By definition \ref{ep} the advantage of the adversary $A$ to break our scheme is $Adv_A^\mathcal{E}=\Big| Pr[b\acute{}=b] - \frac{1}{2} \Big| \leq \epsilon$. So if the game is finished successfully, the algorithm $B$ will be able to obtain $K_{W^*}$ from the records of the random oracle database.
			It means that if the adversary is able to break our scheme with advantage $\epsilon$, then the challenger can break the security of the proposed scheme with only one public/private key pair with advantage $\epsilon$ ${(\frac{q_c}{q_c+1})}^{q_c}(1-\frac{q_c}{q_c+1})\approx \frac{\epsilon}{ e(q_c+1)}$ and with an additional time of about $q_H$ for retrieving $K_{W^*}$ among the $q_H$ records of the random oracle database.
		\end{enumerate}
	\end{proof} 
	
	The technique used in the proof of Theorem \ref{Th2} is the same as the technique of analysis of Boneh and Franklin IBE scheme \cite{BF}. \\ 
	
	Note that we have assumed that the bit length of the $e_i$'s in the real scheme and the above game are equal. The adversary is not able to distinguish between the public/private key pairs generated in the proposed scheme and the public/private key pairs generated in the above game, because algorithm $B$ selects the $r_{i,j}$'s and $b_{i,j}$'s at random. So in $e_{i,j}=p(r_{i,j}+y_\alpha b_{i,j})+zq(k_\alpha b_{i,j}+r_{i,j} k_r)$ the values of $(r_{i,j}+y_\alpha b_{i,j})$ and $(k_\alpha b_{i,j}+r_{i,j} k_r)$ are random integers. We can, therefore, be sure that the $(e_{i, j},d_{i, j})$'s look like uniformly distributed random numbers and the adversary cannot distinguish those from random integers.
	\section{Performance evaluation}
	\label{performance}
	\color{black}In this section, we compare the proposed scheme to the previous MP-NIKE proposals. The proposed scheme is efficient in computation time and also in terms of memory complexity. It is remarkable how it can work on microcontrollers like CC2538. The CC2538 is a wireless microcontroller System-on-Chip (SoC) with 32KB on-chip RAM and up to 512KB on-chip flash.
	\color{black}Based on the data-sheet of the Texas Instruments CC2538 \cite{cc2538} ch. 22, p.503, an RSA-CRT-1024 \color{black}(modular exponentiation by using CRT algorithm) \color{black}takes around 15.6 ms on this chip \cite{cc2538}. We can state that in our proposed scheme  for obtaining a shared key among nineteen users at 80-bit security, each user requires to compute 18 modular exponentiations in a 1024-bit modulus which takes about $18\times15.6=281$ milliseconds. For comparison, in the 5Gen multilinear map (extension of CLT13), which is the most efficient multilinear map \cite{weakCLT13,5Gen}, the 19-party key agreement in 80-bit security takes about 33 seconds and, \color{black} as shown in Table 2, its memory complexity is polynomial in security parameter \cite{5Gen}. By using the GGHLite multilinear map,   computing the shared key among seven users takes 1.75 seconds in a 16-core CPU \cite{10}. \color{black} 
	The CLT13 and GGHLite are the only multilinear maps which have been implemented so far. 
	Table 2 compares the previous proposals against the proposed scheme in terms of security, computational overhead and memory complexity. \color{black}In our proposed scheme, each entity stores only one public/private key and a public parameter, which is supposed to be 4096 bits aggregately in 80-bit security. As illustrated in Table 2, among the non-interactive key exchange schemes, only the Eskeland's scheme and our proposed scheme have $O(1)$ of memory overhead.\color{black}
	\section{Conclusion}
	\label{Conclu}
	
	The key distribution problem is linked to symmetric cryptography approaches and is critical in environments with a large number of users or where users are changing \cite{Vijayakumar2018,Reegan2017}.  The MP-NIKE schemes aim to face with this issue efficiently.  In this paper, we present an attack against the Eskeland's MP-NIKE scheme and then we proposed a new MP-NIKE scheme which is secure against any number of colluding users.    
	
	In terms of performance, our proposal is efficient in its various phases, including setup, key generation, deriving a shared key and updating the shared key after adding a new user.  Besides, the practical applicability of our proposal is clear.  In particular, we can use the proposed scheme in various cryptographic applications like broadcast encryption and secure group communication for IoT and WSNs. 
	
	As showed and proven in Section \ref{Security},  our proposed scheme bases its security on the intractability of the Fiat-Naor problem, which is equal to the root extraction modulo a composite number. While the security of the proposed MP-NIKE scheme relies on the random oracle model, an attractive future work will be to introduce a new MP-NIKE proposal which is secure in the standard model. 
	
	Finally, we would like to highlight that cybersecurity issues from the full plethora of IoT devices may be the next big nightmare for information and communications technology security administrators. We hope this contribution will increase the security of these devices and solve some of their problems.

	\EOD
\end{document}